\documentclass[a4paper]{article}


\usepackage{bbm}
\usepackage{amsmath,amssymb,graphicx}
\usepackage{epsfig}
\usepackage{epic,eepic}
\usepackage{a4wide}

\newcommand{\DS}[1]{{\displaystyle #1}}

\newcommand{\UL}[1]{{\underline #1}}

\newcommand{\nbP}{\mathbbm{P}}

\newcommand{\nbE}{\mathbbm{E}}

\renewcommand{\Pr}{\nbP}
\renewcommand{\phi}{\varphi}
\newcommand{\sleq}{\leq_{\rm st}}

\def\qed{\mbox{\rule[0pt]{1.5ex}{1.5ex}} \vspace{0.2cm}}

\newtheorem{theorem}{Theorem}

\newenvironment{proof}{\noindent \textbf{Proof.}\ }{\qed\par\vskip 4mm\par}

\begin{document}

\title{Optimization results for a generalized coupon collector problem
\thanks{This work was partially funded by the French ANR project SocioPlug (ANR-13-INFR-0003), and by the DeSceNt project granted by the Labex CominLabs excellence laboratory (ANR-10-LABX-07-01).}}

\author{Emmanuelle Anceaume$^1$, Yann Busnel$^{2,4}$, 
Ernst Schulte-Geers$^3$, Bruno Sericola$^4$}

\date{\small
$^1$ CNRS, Campus de Beaulieu, 35042 Rennes Cedex, France\\ 
$^2$ Ensai, Campus de Ker-Lann, BP 37203, 35172 Bruz, Cedex, France\\
$^3$ BSI, Godesberger Allee 185-189, 53175 Bonn, Germany\\
$^4$ Inria, Campus de Beaulieu, 35042 Rennes Cedex, France} 

\maketitle 

\begin{abstract}
We study in this paper a generalized coupon collector problem, which consists
in analyzing the time needed to collect
a given number of distinct coupons that are drawn from a set of coupons
with an arbitrary probability distribution. We suppose that a special coupon
called the null coupon can be drawn but never belongs to any collection.
In this context, we prove that the almost uniform distribution, for which all
the non-null coupons have the same drawing probability, is the distribution
which stochastically minimizes the time
needed to collect a 
fixed number of distinct coupons. 
Moreover, we show that in a given closed subset of probability distributions, the distribution with all its entries, but one, equal to the smallest possible value is the one, which 
stochastically maximizes the time needed to collect a 
fixed number of distinct coupons.
An computer science application shows the utility of these results.

\begin{center}
	\textbf{Keywords}
\end{center} 

Coupon collector problem; Optimization; Schur-convex functions.
\end{abstract}


\section{Introduction}
The coupon collector problem is an old problem, which consists in evaluating
the time needed to get a collection 
of different objects drawn randomly
using a given probability distribution.  
This problem has given rise to a lot of attention
from researchers in various fields since it has applications in many 
scientific domains including computer science and optimization,
see \cite{Boneh97} for several engineering examples.

More formally, consider a set of $n$ coupons, which are drawn randomly one by 
one, with replacement, coupon $i$ being drawn with probability $p_i$.
The classical coupon collector problem is         
to determine the expectation or the distribution of the number 
of coupons that need to be drawn from the set of $n$ coupons
to obtain the full collection of the $n$ coupons.
A large number of papers have been devoted to the analysis of asymptotics
and limit distributions of this distribution when $n$ tends to infinity,
see \cite{Doumas12a} or \cite{Neal08} and the references therein. 
In \cite{Brown08}, the authors obtain new formulas concerning this
distribution and they also provide simulation techniques to compute it as
well as analytic bounds of it.
The asymptotics of the rising moments are studied in \cite{Doumas12b}.

We suppose in this paper that $p=(p_1,\ldots,p_n)$ is not necessarily a probability distribution, \emph{i.e.}, we suppose that
$\sum_{i=1}^n p_i \leq 1$ and we define 
$p_0 = 1 - \sum_{i=1}^n p_i$.
This means that there is a null coupon, denoted by $0$,
which is drawn with probability $p_0$, but that does not belong 
to the collection. 
We are interested, in this setting, in the time needed to collect $c$ different coupons among coupons $1,\ldots,n$, when a coupon is drawn, with replacement, at each discrete time $1,2,\ldots$ among coupons $0,1,\ldots,n$.
This time is denoted by $T_{c,n}(p)$ for $c=1,\ldots,n$. Clearly,
$T_{n,n}(p)$ is the time needed to get the full collection.
The random variable $T_{c,n}(p)$ has been considered in 
\cite{Rubin65} in the case where the drawing probability distribution is 
uniform. The expected value $\nbE[T_{c,n}(p)]$ has been obtained in \cite{Flajolet92} when $p_0=0$. Its distribution and its moments have been obtained in \cite{ABS15} using Markov chains.

In this paper, we prove that the almost uniform distribution, denoted by $v$ and defined by $v=(v_1,\ldots,v_n)$ with $v_i = (1-p_0)/n$, where
$p_0$ is fixed, is the distribution
which stochastically minimizes the time $T_{c,n}(p)$
such that $p_0 = 1 - \sum_{i=1}^n p_i$.
This result was expressed as a conjecture in \cite{ABS15}
where it is proved that the result is true for $c=2$ and for $c=n$ extending the sketch of the proof proposed in 
\cite{Boneh97} to the case $p_0 > 0$. It has been also proved in \cite{ABS15} that the result is true for the expectations, that is that $\nbE[T_{c,n}(u)] \leq \nbE[T_{c,n}(v)] \leq \nbE[T_{c,n}(p)].$

We first consider in Section 2, the case where $p_0 = 0$ and then we extend it to one of $p_0 > 0$.
We show moreover in Section 3, that in a given closed subset of probability distributions, the distribution with all its entries, but one, equal to the smallest possible value is the one which
stochastically maximizes the time $T_{c,n}(p)$.
This work is motivated by the worst case analysis of the behavior of streaming algorithms in network monitoring applications as shown in Section 4.

\section{Distribution minimizing the distribution of $T_{c,n}(p)$}
Recall that $T_{c,n}(p)$, with $p=(p_1,\ldots,p_n)$, is the number 
of coupons that need to
be drawn from the set $\{0,1,2,\ldots,n\}$, with replacement, till one first
obtains a collection with $c$ different coupons, $1 \leq c \leq n$,
among $\{1,\ldots,n\}$, where coupon $i$ is drawn with probability $p_i$,
$i=0,1,\ldots,n$.

The distribution of $T_{c,n}(p)$ has been obtained in \cite{ABS15} using 
Markov chains and is given by
\begin{equation} \label{eqbase}
\Pr\{T_{c,n}(p) > k\} = \sum_{i=0}^{c-1} (-1)^{c-1-i} {n-i-1 \choose n-c}
\sum_{J \in S_{i,n}} (p_0 + P_J)^k,
\end{equation}
where
$S_{i,n} = \left\{J \subseteq \{1,\ldots,n\} \mid |J| = i \right\}$ and,
for every $J \subseteq \{1,\ldots,n\}$, $P_J$ is defined by 
$P_J = \sum_{j \in J} p_j$. Note that we have $S_{0,n} = \emptyset$,  
$P_\emptyset = 0$ and $|S_{i,n}| = {n \choose i}$.

This result also shows as expected that the function
$\Pr\{T_{c,n}(p) > k\}$, as a function of $p$,
is symmetric, which means that it has the same value for any permutation of the entries of $p$.

We recall that if $X$ and $Y$ are two real random variable then
we say that $X$ is stochastically smaller (resp. larger) 
than $Y$, and we write $X \sleq Y$ (resp. $Y \sleq X$),
if $\Pr\{X > t\} \leq \Pr\{Y > t\}$, for all real numbers $t$.
This stochastic order is also referred to as the strong stochastic order.

We consider first, in the following subsection, the case where $p_0=0$.
 
\subsection{The case $p_0=0$} \label{p00}
This case corresponds the fact that there is no null coupon, which means that
all the
coupons can belong to the collection. We thus have
$\sum_{i=1}^n p_i = 1$.
For all $n \geq 1$, $i=1,\ldots,n$ and $k \geq 0$,
we denote by $N_i^{(k)}$ the number of coupons of type $i$
collected at instants $1,\ldots,k$. It is well-known that the joint 
distribution of the $N_i^{(k)}$ is a multinomial distribution, \emph{i.e.},
for all $k_1, \ldots,k_n \geq 0$ such that
$\sum_{i = 1}^n k_i = k$, we have 
\begin{equation} \label{multinome}
\Pr\{N_1^{(k)} = k_1,\ldots,N_n^{(k)} = k_n\} = 
\frac{k!}{k_1!\cdots k_n!} p_1^{k_1} \cdots p_n^{k_n}.
\end{equation}
We also denote by $U_n^{(k)}$ the number of distinct coupon's types, among $1,\ldots,n$, already drawn at instant $k$.
We clearly have, with probability $1$, $U_n^{(0)} = 0$, $U_n^{(1)} = 1$ and, for $i=0,\ldots,n$,
$$\Pr\{U_n^{(k)} = i\} = \sum_{J \in S_{i,n}}
\Pr\{N_u^{(k)} > 0, \; u \in J \mbox{ and } N_u^{(k)} = 0, \; u \notin J\}.$$
Moreover, it is easily checked that, 
$$T_{c,n}(p) > k \Longleftrightarrow U^{(k)} < c.$$
We then have
\begin{eqnarray*}
\Pr\{T_{c,n}(p) > k\} & = & \Pr\{U_n^{(k)} < c\} \\
& = & \sum_{i=0}^{c-1} \Pr\{U_n^{(k)} = i\} \\
& = & \sum_{i=0}^{c-1} \sum_{J \in S_{i,n}}
\Pr\{N_u^{(k)} > 0, \; u \in J \mbox{ and } N_u^{(k)} = 0, \; u \notin J\}.
\end{eqnarray*}
Using Relation (\ref{multinome}), we obtain
\begin{equation} \label{vide}
\Pr\{T_{c,n}(p) > k\} = \sum_{i=0}^{c-1} \sum_{J \in S_{i,n}}
\sum_{\UL{k} \in E_{k,J}}
k! \prod_{j \in J} \frac{p_j^{k_j}}{k_j!},
\end{equation}
where $E_{k,J}$ is the set of vectors defined by
$$E_{k,J} = \left\{\UL{k} = (k_j)_{j \in J} \mid k_j > 0, 
\mbox{ for all } j \in J
\mbox{ and } K_J = k\right\},$$
with $K_J = \sum_{j \in J} k_j$.
 
\begin{theorem} \label{autreloi}
For all $n \geq 2$ and $p=(p_1,\ldots,p_n) \in (0,1)^n$ with
$\sum_{i=1}^n p_i = 1$, and for all $c=1,\ldots,n$,
we have $T_{c,n}(p') \sleq T_{c,n}(p)$,
where $p'=(p_1,\ldots,p_{n-2},p'_{n-1},p'_n)$
with
$p'_{n-1} = \lambda p_{n-1} + (1-\lambda) p_n$ and
$p'_{n} = (1-\lambda) p_{n-1} + \lambda p_n$,
for all $\lambda \in [0,1]$. 
\end{theorem}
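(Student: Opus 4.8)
The plan is to fix $k\ge 0$ and work with $\Pr\{T_{c,n}(p)>k\}=\Pr\{U_n^{(k)}<c\}$, showing that this quantity does not increase under the prescribed transformation of $(p_{n-1},p_n)$. The transformation preserves the sum $s:=p_{n-1}+p_n$ and, for $\lambda\in[0,1]$, moves the pair toward the diagonal $p_{n-1}=p_n$ (it is an elementary majorization, or $T$-transform); since $\Pr\{T_{c,n}(p)>k\}$ is symmetric in the entries of $p$, the statement is precisely the assertion that this probability is a Schur-convex function of $(p_{n-1},p_n)$ at fixed sum. Rather than verifying Schur-convexity directly on the explicit formula (\ref{vide}), I would obtain it through a coupling that isolates the dependence on the split.

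First I would merge coupons $n-1$ and $n$ into a single super-coupon of probability $s$, and condition on the entire coarse draw sequence over $\{1,\ldots,n-2\}$ together with the super-coupon. Write $q:=p_{n-1}/s$. Conditionally on this coarse sequence, let $A$ be the number of distinct types among $1,\ldots,n-2$ seen in the first $k$ steps and let $m$ be the number of super-coupon draws; the coarse sequence is governed by the probabilities $p_1,\ldots,p_{n-2},s$, so both $A$ and $m$ are measurable with respect to it and do not depend on $q$. Given $m\ge 1$, the $m$ super-coupon draws are independent, each equal to $n-1$ with probability $q$ and to $n$ with probability $1-q$, so the number $W\in\{1,2\}$ of distinct types collected among $\{n-1,n\}$ satisfies $\Pr\{W=1\}=q^m+(1-q)^m$; if $m=0$ then $W=0$.

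Since the types $\{1,\ldots,n-2\}$ and $\{n-1,n\}$ partition $\{1,\ldots,n\}$, we have $U_n^{(k)}=A+W$, and a short case analysis on $W\in\{0,1,2\}$ shows that the conditional probability $\Pr\{A+W<c\}$ is independent of $q$ except on the event $\{A=c-2,\ m\ge 1\}$, where it equals $q^m+(1-q)^m$. Taking expectations over the coarse sequence gives
\[
\Pr\{T_{c,n}(p)>k\}=C+\nbE\!\left[\nbun_{\{A=c-2,\,m\ge1\}}\left(q^m+(1-q)^m\right)\right],
\]
where neither $C$ nor the law of $(A,m)$ depends on $q$. Writing $q'=p'_{n-1}/s$, the identity $(q',1-q')=\lambda(q,1-q)+(1-\lambda)(1-q,q)$ exhibits $(q',1-q')$ as a convex combination of $(q,1-q)$ and its swap, so convexity of $x\mapsto x^m$ yields $q'^m+(1-q')^m\le q^m+(1-q)^m$ for every $m$. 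Substituting into the displayed identity gives $\Pr\{T_{c,n}(p')>k\}\le\Pr\{T_{c,n}(p)>k\}$ for all $k$, which is exactly $T_{c,n}(p')\sleq T_{c,n}(p)$.

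The main obstacle is the bookkeeping in the middle step: recognizing that the split $q$ influences the collection event only through the single critical configuration in which exactly $c-2$ of the remaining coupons are already present, so that whether one or two of $\{n-1,n\}$ are collected decides whether the threshold $c$ is reached, and checking that $A$ and $m$ are genuinely independent of $q$. Once this decomposition is in place, the conclusion reduces to the elementary convexity of $x\mapsto x^m$, which is the easy part; equivalently, it is this convexity that makes $q^m+(1-q)^m$, and hence $\Pr\{T_{c,n}(p)>k\}$, Schur-convex in $(p_{n-1},p_n)$.
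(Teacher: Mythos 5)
Your proof is correct, but it takes a genuinely different route from the paper's. The paper works directly on the explicit multinomial expansion of $\Pr\{T_{c,n}(p)>k\}$: it partitions $S_{i,n}$ into four classes according to whether $n-1$ and $n$ belong to $J$, rewrites each term $T_i(p)$ as $A_i(p)+B_i(p)-C_i(p)+D_i(p)$, telescopes the differences $A_i(p)-C_{i+1}(p)$ (which collapse, via the constraint $K_J=k-1$, to terms depending on the pair only through the invariant sum $p_{n-1}+p_n$), and then applies convexity of $x\mapsto x^s$ to the sole surviving $\lambda$-sensitive term $A_{c-1}(p)$. You instead merge coupons $n-1$ and $n$ into a super-coupon of mass $s=p_{n-1}+p_n$, condition on the coarse draw sequence, and observe that conditionally the $m$ super-draws resolve into types $n-1$ and $n$ as i.i.d.\ Bernoulli$(q)$ with $q=p_{n-1}/s$, so the split affects $\Pr\{A+W<c\}$ only on the critical event $\{A=c-2,\,m\geq 1\}$, where it enters through $q^m+(1-q)^m$; your case analysis ($A<c-2$, $A=c-2$, $A\geq c-1$) is exhaustive and correct, the law of $(A,m)$ indeed does not depend on $q$, and since the transformation fixes $s$ and replaces $q$ by $q'=\lambda q+(1-\lambda)(1-q)$, convexity of $x\mapsto x^m$ finishes the argument. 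Both proofs ultimately rest on the same kernel inequality (yours for $q^m+(1-q)^m$, the paper's for $p_{n-1}^{k-K_J}+p_n^{k-K_J}$), but your decomposition buys conceptual clarity and far less bookkeeping: it explains \emph{why} only one configuration is sensitive to the split, and it would extend with no extra work to the case $p_0>0$ (treat the null coupon as one more coarse type), which the paper instead handles separately in Theorem \ref{notetheo2} via a binomial conditioning on $N_0^{(k)}$. What the paper's algebraic route buys is self-containment within the closed-form expression (\ref{vide}) it has already derived, exhibiting Schur-convexity directly on that formula without introducing any auxiliary probabilistic construction.
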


\begin{proof}
The result is trivial for $c=1$, since we have
$T_{1,n}(p) = 1$, and for $k = 0$ since both terms are equal to $1$.
We thus suppose now that $c \geq 2$ and $k \geq 1$.
The fact that $k \geq 1$ means in particular that the term $i=0$ in
Relation (\ref{vide}) is equal to $0$. We then have
\begin{equation} \label{vide1}
\Pr\{T_{c,n}(p) > k\} = \sum_{i=1}^{c-1} \sum_{J \in S_{i,n}}
\sum_{\UL{k} \in E_{k,J}}
k! \prod_{j \in J} \frac{p_j^{k_j}}{k_j!}.
\end{equation}

To simplify the notation, we denote by $T_i(p)$ the $i$th term of this sum,
that is
\begin{equation} \label{Ti}
T_i(p) = \sum_{J \in S_{i,n}} \sum_{\UL{k} \in E_{k,J}}
k! \prod_{j \in J} \frac{p_j^{k_j}}{k_j!}.
\end{equation}

For $i=1$, we have $S_{1,n} = \{\{1\},\ldots,\{n\}\}$ and 
$E_{k,\{j\}} = \{k\}$. The term $T_1(p)$ is thus given by
$$T_1(p) = \sum_{j=1}^{n} p_j^k.$$
Consider now the term $T_i$ for $i \geq 2$.
We split the set $S_{i,n}$ into four subsets depending on wether the indices
$n-1$ and $n$ belong to its elements. More precisely, we introduce the 
partition of the set $S_{i,n}$ in the four subsets 
$S_{i,n}^{(1)}$, $S_{i,n}^{(2)}$, $S_{i,n}^{(3)}$ and $S_{i,n}^{(4)}$
defined by
\begin{eqnarray*}
S_{i,n}^{(1)} & = & \left\{J \subseteq \{1,\ldots,n\} \mid |J| = i 
                   \mbox{ with } n-1 \in J \mbox{ and } n \notin J\right\}, \\
S_{i,n}^{(2)} & = & \left\{J \subseteq \{1,\ldots,n\} \mid |J| = i 
                   \mbox{ with } n-1 \notin J \mbox{ and } n \in J\right\}, \\
S_{i,n}^{(3)} & = & \left\{J \subseteq \{1,\ldots,n\} \mid |J| = i 
                   \mbox{ with } n-1 \in J \mbox{ and } n \in J\right\}, \\
S_{i,n}^{(4)} & = & \left\{J \subseteq \{1,\ldots,n\} \mid |J| = i 
                   \mbox{ with } n-1 \notin J \mbox{ and } n \notin J\right\}.
\end{eqnarray*}
These subsets can also be written as
$S_{i,n}^{(1)} = S_{i-1,n-2} \cup \{n-1\}$, 
$S_{i,n}^{(2)} = S_{i-1,n-2} \cup \{n\}$, 
$S_{i,n}^{(3)} = S_{i-2,n-2} \cup \{n-1,n\}$, 
and $S_{i,n}^{(4)} = S_{i,n-2}$.
Using these equalities, the term $T_i$ of Relation (\ref{Ti}) becomes
\begin{eqnarray*}
T_i(p) 
& = & \sum_{J \in S_{i-1,n-2}} \sum_{\UL{k} \in E_{k,J\cup \{n-1\}}}
k! \left(\prod_{j \in J} \frac{p_j^{k_j}}{k_j!}\right) 
\frac{p_{n-1}^{k_{n-1}}}{k_{n-1}!} \\
&   & + \; \sum_{J \in S_{i-1,n-2}} \sum_{\UL{k} \in E_{k,J \cup \{n\}}}
k! \left(\prod_{j \in J} \frac{p_j^{k_j}}{k_j!}\right) 
\frac{p_{n}^{k_{n}}}{k_{n}!} \\
&   & + \; \sum_{J \in S_{i-2,n-2}} \sum_{\UL{k} \in E_{k,J \cup \{n-1,n\}}}
k! \left(\prod_{j \in J} \frac{p_j^{k_j}}{k_j!}\right) 
\frac{p_{n-1}^{k_{n-1}}}{k_{n-1}!} \frac{p_{n}^{k_{n}}}{k_{n}!} \\
&   & + \; \sum_{J \in S_{i,n-2}} \sum_{\UL{k} \in E_{k,J}}
k! \left(\prod_{j \in J} \frac{p_j^{k_j}}{k_j!}\right). 
\end{eqnarray*}
Let us now introduce the sets $L_{k,J}$ defined by
$$L_{k,J} = \left\{\UL{k} = (k_j)_{j \in J} \mid k_j > 0, 
\mbox{ for all } j \in J
\mbox{ and } K_J \leq k\right\}.$$
Using these sets and, to clarify the notation, setting $k_{n-1}=\ell$
and $k_{n}=h$ when needed, we obtain
\begin{eqnarray*}
T_i(p) 
& = & \sum_{J \in S_{i-1,n-2}} \sum_{\UL{k} \in L_{k-1,J}}
k! \left(\prod_{j \in J} \frac{p_j^{k_j}}{k_j!}\right) 
\frac{p_{n-1}^{k-K_J}}{(k-K_J)!} \\
&   & + \; \sum_{J \in S_{i-1,n-2}} \sum_{\UL{k} \in L_{k-1,J}}
k! \left(\prod_{j \in J} \frac{p_j^{k_j}}{k_j!}\right) 
\frac{p_{n}^{k-K_J}}{(k-K_J)!} \\
&   & + \; \sum_{J \in S_{i-2,n-2}} \sum_{\UL{k} \in L_{k-2,J}}
k! \left(\prod_{j \in J} \frac{p_j^{k_j}}{k_j!}\right) 
\sum_{\ell>0,h>0,\ell+h=k-K_J}\frac{p_{n-1}^{\ell}}{\ell!} \frac{p_{n}^{h}}{h!} \\
&   & + \; \sum_{J \in S_{i,n-2}} \sum_{\UL{k} \in E_{k,J}}
k! \left(\prod_{j \in J} \frac{p_j^{k_j}}{k_j!}\right), 
\end{eqnarray*}
which can also be written as
\begin{eqnarray*}
T_i(p) 
& = & \sum_{J \in S_{i-1,n-2}} \sum_{\UL{k} \in L_{k-1,J}}
\frac{k!}{(k-K_J)!} \left(\prod_{j \in J} \frac{p_j^{k_j}}{k_j!}\right) 
\left(p_{n-1}^{k-K_J} + p_{n}^{k-K_J}\right) \\
&   & + \; \sum_{J \in S_{i-2,n-2}} \sum_{\UL{k} \in L_{k-2,J}}
\frac{k!}{(k-K_J)!}\left(\prod_{j \in J} \frac{p_j^{k_j}}{k_j!}\right) 
\left(p_{n-1} + p_{n}\right)^{k-K_J} \\
&   & - \; \sum_{J \in S_{i-2,n-2}} \sum_{\UL{k} \in L_{k-2,J}}
\frac{k!}{(k-K_J)!}\left(\prod_{j \in J} \frac{p_j^{k_j}}{k_j!}\right) 
\left(p_{n-1}^{k-K_J} + p_{n}^{k-K_J}\right) \\
&   & + \; \sum_{J \in S_{i,n-2}} \sum_{\UL{k} \in E_{k,J}}
k! \left(\prod_{j \in J} \frac{p_j^{k_j}}{k_j!}\right).
\end{eqnarray*}
We denote these four terms respectively by $A_i(p)$, $B_i(p)$, $C_i(p)$ 
and $D_i(p)$.
We thus have, for $i \geq 2$, $T_i(p) = A_i(p) + B_i(p) - C_i(p) + D_i(p)$.
We have already shown that $T_1(p) = A_1(p) + D_1(p)$, so we set
$B_1(p) = C_1(p) = 0$.
We then have
\begin{eqnarray}
\Pr\{T_{c,n}(p) > k\} 
& = & \sum_{i=1}^{c-1} T_i(p) \nonumber \\
& = & A_{c-1}(p) +
\sum_{i=1}^{c-2} (A_i(p) - C_{i+1}(p)) + \sum_{i=2}^{c-1} B_i(p) +
\sum_{i=1}^{c-1} D_i(p). \label{ABCD}
\end{eqnarray}
For $i \geq 1$, we have
\begin{eqnarray*}
A_i(p) - C_{i+1}(p) 
& = & \sum_{J \in S_{i-1,n-2}} \sum_{\UL{k} \in L_{k-1,J}}
\frac{k!}{(k-K_J)!} \left(\prod_{j \in J} \frac{p_j^{k_j}}{k_j!}\right) 
\left(p_{n-1}^{k-K_J} + p_{n}^{k-K_J}\right) \\
&   &- \; \sum_{J \in S_{i-1,n-2}} \sum_{\UL{k} \in L_{k-2,J}}
\frac{k!}{(k-K_J)!}\left(\prod_{j \in J} \frac{p_j^{k_j}}{k_j!}\right) 
\left(p_{n-1}^{k-K_J} + p_{n}^{k-K_J}\right) \\
& = & \sum_{J \in S_{i-1,n-2}} \sum_{\UL{k} \in E_{k-1,J}}
\frac{k!}{(k-K_J)!} \left(\prod_{j \in J} \frac{p_j^{k_j}}{k_j!}\right) 
\left(p_{n-1}^{k-K_J} + p_{n}^{k-K_J}\right). 
\end{eqnarray*}
By definition of the set $E_{k-1,J}$, we have $K_J = k-1$ in the previous equality. This gives
$$A_i(p) - C_{i+1}(p) = \sum_{J \in S_{i-1,n-2}} \sum_{\UL{k} \in E_{k-1,J}}
k! \left(\prod_{j \in J} \frac{p_j^{k_j}}{k_j!}\right) 
\left(p_{n-1} + p_{n}\right).$$
Using the fact that
the function $x\longmapsto x^s$ is convex on interval $[0,1]$ for every non negative 
integer $s$, we have
\begin{eqnarray*}
{p'}_{n-1}^{k-K_J} + {p'}_{n}^{k-K_J} 
& = & \left(\lambda p_{n-1} + (1-\lambda) p_n\right)^{k-K_J}
+ \left((1-\lambda) p_{n-1} + \lambda p_n\right)^{k-K_J} \\
& \leq & \lambda p_{n-1}^{k-K_J} + (1-\lambda) p_n^{k-K_J}
+ (1-\lambda) p_{n-1}^{k-K_J} + \lambda p_n^{k-K_J} \\
& = & p_{n-1}^{k-K_J} + p_n^{k-K_J},
\end{eqnarray*}
and in particular $p'_{n-1} + p'_{n} = p_{n-1} + p_{n}$.
It follows that 
\begin{eqnarray*}
& & A_{c-1}(p') \leq A_{c-1}(p), \\ 
& & A_i(p') - C_{i+1}(p') = A_i(p) - C_{i+1}(p), \\
& & B_i(p') = B_i(p), \\
& & D_i(p') = D_i(p),
\end{eqnarray*}
and from (\ref{ABCD}) that
$\Pr\{T_{c,n}(p') > k\} \leq \Pr\{T_{c,n}(p) > k\}$, which concludes the proof.
\end{proof}

The function $\Pr\{T_{c,n}(p) > k\}$, as a function of $p$, being 
symmetric, this theorem can easily 
be extended to the case where the two entries $p_{n-1}$ and $p_n$ of $p$ 
are any $p_i, p_j \in \{p_1,\ldots,p_n\}$, with $i \neq j$.
 
In fact, we have shown in this theorem that for fixed $n$ and $k$, 
the function of $p$, $\Pr\{T_{c,n}(p) \leq k\}$, is a Schur-convex function,
that is, a function that preserves the order of majorization.
See \cite{Marshall81} for more details on this subject.

\begin{theorem} \label{notetheo1}
For every $n \geq 1$ and $p=(p_1,\ldots,p_n) \in (0,1)^n$ with
$\sum_{i=1}^n p_i = 1$, and for all $c=1,\ldots,n$,
we have $T_{c,n}(u) \sleq T_{c,n}(p)$,
where $u=(1/n,\ldots,1/n)$ is the uniform distribution.
\end{theorem}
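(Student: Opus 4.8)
The plan is to deduce the result from Theorem \ref{autreloi} by a majorization argument. Recall that a vector $x$ is \emph{majorized} by $y$, written $x \prec y$, when, after ordering the entries decreasingly, every partial sum of $x$ is dominated by the corresponding one of $y$, the two total sums being equal. The operation sending $(p_{n-1},p_n)$ to $(\lambda p_{n-1}+(1-\lambda)p_n,\ (1-\lambda)p_{n-1}+\lambda p_n)$ in Theorem \ref{autreloi} is precisely a \emph{T-transform} (a Robin Hood transfer) acting on those two coordinates, and the remark following that theorem extends it to any pair of coordinates by symmetry. Thus Theorem \ref{autreloi} asserts exactly that a single T-transform applied to $p$ cannot increase $\Pr\{T_{c,n}(p)>k\}$, for every $k$.

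First I would record two classical facts about majorization (see \cite{Marshall81}). First, the uniform vector $u=(1/n,\ldots,1/n)$ is the minimum element of the majorization order on the set of probability vectors on $\{1,\ldots,n\}$, that is, $u \prec p$ for every such $p$. Second, by the Hardy--Littlewood--P\'olya (Muirhead) theorem, $u \prec p$ implies that $u$ can be reached from $p$ by a finite sequence of T-transforms, $p = q^{(0)} \to q^{(1)} \to \cdots \to q^{(m)} = u$.

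I would then chain Theorem \ref{autreloi} along this sequence. Each $q^{(t)}$ arises from $q^{(t-1)}$ by equalizing two coordinates through a convex combination; since the two entries involved lie in $(0,1)$ and their sum is preserved, every $q^{(t)}$ is again a probability vector with all entries in $(0,1)$, so Theorem \ref{autreloi}, in its form for an arbitrary pair of coordinates, applies at each step and gives $\Pr\{T_{c,n}(q^{(t)})>k\} \leq \Pr\{T_{c,n}(q^{(t-1)})>k\}$ for all $k \geq 0$. Telescoping over $t=1,\ldots,m$ yields $\Pr\{T_{c,n}(u)>k\} \leq \Pr\{T_{c,n}(p)>k\}$ for all $k$, which is by definition $T_{c,n}(u) \sleq T_{c,n}(p)$. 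The case $n=1$ is trivial, since then $T_{1,1}(p)=1$ almost surely.

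The only delicate point, and the main obstacle, is verifying that the T-transform path remains inside the admissible domain $(0,1)^n$ so that Theorem \ref{autreloi} is legitimately invoked at every stage; this holds because equalizing two entries of $(0,1)$ produces two entries of $(0,1)$ while leaving the others unchanged. Alternatively, since $\Pr\{T_{c,n}(p)>k\}$ is a polynomial, hence continuous, function of $p$, one may sidestep this concern by establishing the inequality on the open simplex and extending it to the closure by continuity.
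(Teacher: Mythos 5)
Your proposal is correct and takes essentially the same approach as the paper: both results are obtained by chaining Theorem \ref{autreloi} along a finite sequence of T-transforms (pairwise convex-combination transfers) carrying $p$ to $u$, and telescoping the resulting inequalities on $\Pr\{T_{c,n}(\cdot)>k\}$. The only difference is that the paper constructs the chain explicitly (at each step transferring mass between entries $p_i < 1/n < p_j$ so that one entry becomes exactly $1/n$, hence at most $n-1$ steps), whereas you invoke the Hardy--Littlewood--P\'olya theorem for the existence of such a chain; your check that every intermediate vector stays in $(0,1)^n$ with sum $1$ is the same observation the paper uses implicitly.
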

 
\begin{proof}
To prove this result, we apply successively and at most $n-1$ 
times Theorem \ref{autreloi}
as follows. We first choose two different entries 
of $p$, say $p_i$ and $p_j$ such that $p_i < 1/n < p_j$ and next to
define $p'_i$ and $p'_j$ by
$$p'_i = \frac{1}{n} \mbox{ and } p'_j = p_i + p_j - \frac{1}{n}.$$
This leads us to write
$p'_i = \lambda p_i + (1-\lambda) p_j$ and
$p'_j = (1-\lambda) p_i + \lambda p_j$, with  
$$\lambda = \frac{\DS{p_j - 1/n}}{p_j - p_i}.$$
From Theorem \ref{autreloi}, vector $p'$ obtained by taking the 
other entries equal to those of 
$p$, \emph{i.e.}, by taking $p'_\ell = p_\ell$, 
for $\ell \neq i,j$, 
is such that
$\Pr\{T_{c,n}(p') > k\} \leq \Pr\{T_{c,n}(p) > k\}$.
Note that at this point vector $p'$ has at least one entry equal to
$1/n$, so repeating at most $n-1$ this procedure, we get vector $u$,
which concludes the proof.
\end{proof}

To illustrate the steps used in the proof of this theorem, 
we take the following example. Suppose that $n=5$ and
$p = (1/16,1/6,1/4,1/8,19/48)$. 
In a first step, taking $i=4$ and $j=5$, we get
$$p^{(1)}=(1/16,1/6,1/4,1/5,77/240).$$
In a second step, taking $i=2$ and $j=5$, we get
$$p^{(2)}=(1/16,1/5,1/4,1/5,69/240).$$
In a third step, taking $i=1$ and $j=3$, we get
$$p^{(3)}=(1/5,1/5,9/80,1/5,69/240).$$
For the fourth and last step, taking $i=3$ and $j=5$, we get
$$p^{(4)}=(1/5,1/5,1/5,1/5,1/5).$$

\subsection{The case $p_0>0$}
We consider now the case where $p_0 > 0$.
We have $p=(p_1,\ldots,p_n)$ with $p_1 + \cdots +p_n < 1$ and $p_0= 1-(p_1 + \cdots +p_n)$.
Recall that in this case $T_{c,n}(p)$ is the time or the number of steps needed to collect a subset of $c$ 
different coupons
among coupons $1,\ldots,n$. Coupon $0$ is not allowed to belong to the collection. As in the previous subsection we denote, for $i=0,1,\ldots,n$ and $k \geq 0$, by $N_i^{(k)}$ the number of coupons of type $i$ collected at instants $1,\ldots,k$.
We then have
for all $k_0,k_1, \ldots,k_n \geq 0$ such that
$\sum_{i = 0}^n k_i = k$, we have 
$$\Pr\{N_0^{(k)}=k_0,N_1^{(k)} = k_1,\ldots,N_n^{(k)} = k_n\} = 
\frac{k!}{k_0!k_1!\cdots k_n!} p_0^{k_0}p_1^{k_1} \cdots p_n^{k_n},$$
which can also be written as
\begin{align}
\Pr\{N_0^{(k)}= & k_0,N_1^{(k)} = k_1,\ldots,N_n^{(k)} = k_n\} \nonumber \\
& = {k \choose k_0} p_0^{k_0} (1-p_0)^{k-k_0}
\frac{(k-k_0)!}{k_1!\cdots k_n!} \left(\frac{p_1}{1-p_0}\right)^{k_1} \cdots 
\left(\frac{p_n}{1-p_0}\right)^{k_n}. \label{mul1}
\end{align}
Note that $p/(1-p_0)$ is a probability distribution since
$$\frac{1}{1-p_0} \sum_{i=1}^n p_i = 1,$$
so summing over all the $k_1, \ldots,k_n \geq 0$ such that
$\sum_{i = 1}^n k_i = k-k_0$, we get, for all $k_0 = 0,\ldots,k$,
\begin{equation} \label{mul2}
\Pr\{N_0^{(k)}=k_0\} = {k \choose k_0} p_0^{k_0} (1-p_0)^{k-k_0}.
\end{equation}
From (\ref{mul1}) and (\ref{mul2}) we obtain,
for all $k_1, \ldots,k_n \geq 0$ such that
$\sum_{i = 1}^n k_i = k-k_0$,
\begin{equation} \label{mul3}
\Pr\{N_1^{(k)} = k_1,\ldots,N_n^{(k)} = k_n \mid N_0^{(k)}=k_0\} 
= \frac{(k-k_0)!}{k_1!\cdots k_n!} \left(\frac{p_1}{1-p_0}\right)^{k_1} \cdots 
\left(\frac{p_n}{1-p_0}\right)^{k_n}.
\end{equation}
Recall that $U_n^{(k)}$ is the number of coupon's types 
among $1,\ldots,n$ already drawn at instant $k$.
We clearly have, with probability $1$, $U_n^{(0)} = 0$, and,
for $i=0,\ldots,n$,
$$\Pr\{U_n^{(k)} = i\} = \sum_{J \in S_{i,n}}
\Pr\{N_u^{(k)} > 0, \; u \in J \mbox{ and } N_u^{(k)} = 0, \; u \notin J\}.$$
Moreover, we have as in Subsection \ref{p00}, 
$$T_{c,n}(p) > k \Longleftrightarrow U_n^{(k)} < c.$$
and so
\begin{align*}
\Pr\{T_{c,n}(p) > k \mid & N_0^{(k)}=k_0\} \\
& = \sum_{i=0}^{c-1} \sum_{J \in S_{i,n}}
\Pr\{N_u^{(k)} > 0, \; u \in J \mbox{ and } N_u^{(k)} = 0, \; u \notin J \mid N_0^{(k)}=k_0\}.
\end{align*}
Using Relation (\ref{mul3}), we obtain
\begin{equation} \label{vide2}
\Pr\{T_{c,n}(p) > k \mid N_0^{(k)}=k_0\} = \sum_{i=0}^{c-1} \sum_{J \in S_{i,n}}
\sum_{\UL{k} \in E_{k-k_0,J}}
(k-k_0)! \left(\prod_{j \in J} \frac{\left(\frac{p_j}{1-p_0}\right)^{k_j}}{k_j!}\right),
\end{equation}
where the set $E_{k,J}$ has been defined in Subsection 2.1. 

\begin{theorem} \label{notetheo2}
For every $n \geq 1$ and $p=(p_1,\ldots,p_n) \in (0,1)^n$ with
$\sum_{i=1}^n p_i < 1$, and for all $c=1,\ldots,n$,
we have
$T_{c,n}(u) \sleq T_{c,n}(v) \sleq T_{c,n}(p)$,
where $u=(1/n,\ldots,1/n)$, $v=(v_1,\ldots,v_n)$ with $v_i =(1-p_0)/n$ and 
$p_0 =1 - \sum_{i=1}^n p_i$.
\end{theorem}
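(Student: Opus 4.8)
The plan is to reduce the case $p_0>0$ to the already-settled case $p_0=0$ by conditioning on the number of null coupons drawn. Comparing Relation (\ref{vide2}) with Relation (\ref{vide}), I observe that the conditional law has exactly the structure of the $p_0=0$ formula applied to the renormalized distribution $q=p/(1-p_0)$ over the shorter horizon $k-k_0$; that is, for every $k_0 \in \{0,\ldots,k\}$,
$$\Pr\{T_{c,n}(p) > k \mid N_0^{(k)}=k_0\} = \Pr\{T_{c,n}(q) > k-k_0\},$$
where the right-hand side is evaluated with the genuine probability distribution $q$ (which has no null coupon). Since $p \in (0,1)^n$ with $\sum_i p_i <1$, the vector $q$ lies in $(0,1)^n$ with $\sum_i q_i =1$, so all the results of Subsection \ref{p00} apply to it.

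Next I would record two elementary facts. First, by Relation (\ref{mul2}) the random variable $N_0^{(k)}$ is binomial with parameters $k$ and $p_0$, hence its law depends on the underlying vector only through $p_0$; in particular $v$ and $p$ induce the \emph{same} law for $N_0^{(k)}$, and $\sum_{k_0=0}^{k}\Pr\{N_0^{(k)}=k_0\}=1$. Second, $v/(1-p_0)=u$ and $p/(1-p_0)=q$, so the reduction identity specializes to $\Pr\{T_{c,n}(v)>k \mid N_0^{(k)}=k_0\}=\Pr\{T_{c,n}(u)>k-k_0\}$ and $\Pr\{T_{c,n}(p)>k \mid N_0^{(k)}=k_0\}=\Pr\{T_{c,n}(q)>k-k_0\}$.

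For the inequality $T_{c,n}(v)\sleq T_{c,n}(p)$, I apply Theorem \ref{notetheo1} to the $p_0=0$ distribution $q$, which gives $T_{c,n}(u)\sleq T_{c,n}(q)$, i.e. $\Pr\{T_{c,n}(u)>m\}\le \Pr\{T_{c,n}(q)>m\}$ for every $m$. Taking $m=k-k_0$ yields the conditional inequality for each $k_0$; since $v$ and $p$ share the same law of $N_0^{(k)}$, I then multiply by $\Pr\{N_0^{(k)}=k_0\}$ and sum over $k_0$ to obtain $\Pr\{T_{c,n}(v)>k\}\le \Pr\{T_{c,n}(p)>k\}$ by the law of total probability.

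For the inequality $T_{c,n}(u)\sleq T_{c,n}(v)$, I expand, via total probability over $N_0^{(k)}$,
$$\Pr\{T_{c,n}(v)>k\}=\sum_{k_0=0}^{k}\Pr\{T_{c,n}(u)>k-k_0\}\,\Pr\{N_0^{(k)}=k_0\},$$
and use that the survival function $m\mapsto \Pr\{T_{c,n}(u)>m\}$ is non-increasing, so each term is at least $\Pr\{T_{c,n}(u)>k\}\,\Pr\{N_0^{(k)}=k_0\}$; summing the binomial weights to $1$ gives $\Pr\{T_{c,n}(v)>k\}\ge \Pr\{T_{c,n}(u)>k\}$. Intuitively this merely says that wasting draws on the null coupon can only delay the collection. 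The only points requiring care are the clean identification of the conditional law with the $p_0=0$ formula and the fact that the law of $N_0^{(k)}$ is identical for $v$ and $p$ (both having null probability $p_0$); once these are in place, both inequalities follow from a monotonicity argument together with Theorem \ref{notetheo1}, so I expect no serious obstacle.
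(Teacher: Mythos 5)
Your proposal is correct and follows essentially the same route as the paper: the identity $\Pr\{T_{c,n}(p) > k \mid N_0^{(k)}=k_0\} = \Pr\{T_{c,n}(p/(1-p_0)) > k-k_0\}$ obtained by comparing (\ref{vide}) and (\ref{vide2}), then Theorem \ref{notetheo1} applied to the renormalized distribution for the second inequality, and monotonicity of the survival function $m \mapsto \Pr\{T_{c,n}(u)>m\}$ for the first. The only cosmetic difference is that you keep the argument conditional on $N_0^{(k)}$ and sum at the end, whereas the paper first writes the unconditioned binomial mixture (\ref{min}) and manipulates that; the substance is identical.
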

 
\begin{proof}
From Relation (\ref{vide}) and Relation (\ref{vide2}), we obtain, for all
$k_0 = 0,\ldots,k$,
\begin{equation} \label{min0}
\Pr\{T_{c,n}(p) > k \mid N_0^{(k)}=k_0\} = \Pr\{T_{c,n}(p/(1-p_0)) > k-k_0\}.
\end{equation}
Using (\ref{mul2}) and unconditioning, we obtain
\begin{equation} \label{min}
\Pr\{T_{c,n}(p) > k\} = \sum_{\ell=0}^{k} {k \choose \ell} p_0^\ell
(1-p_0)^{k - \ell}
\Pr\{T_{c,n}(p/(1-p_0)) > k-\ell\}.
\end{equation}
Since $p/(1-p_0)$ is a probability distribution, 
applying Theorem \ref{notetheo1} to this distribution 
and observing that $u = v/(1-p_0)$, we get
\begin{eqnarray*}
\Pr\{T_{c,n}(p) > k\} 
& = & \sum_{\ell=0}^{k} {k \choose \ell} 
p_0^\ell (1-p_0)^{k - \ell} \Pr\{T_{c,n}(p/(1-p_0)) > k-\ell\} \\
& \geq & \sum_{\ell=0}^{k} {k \choose \ell} 
p_0^\ell (1-p_0)^{k - \ell} \Pr\{T_{c,n}(u) > k-\ell\} \\
& = & \sum_{\ell=0}^{k} {k \choose \ell} 
p_0^\ell (1-p_0)^{k - \ell} \Pr\{T_{c,n}(v/(1-p_0)) > k-\ell\} \\
& = & \Pr\{T_{c,n}(v) > k\},
\end{eqnarray*}
where the last equality follows from (\ref{min}).
This proves the second inequality.

To prove the first inequality, observe that 
$\Pr\{T_{c,n}(p/(1-p_0)) > \ell\}$ is decreasing with $\ell$.
This leads, using (\ref{min}), to
$$\Pr\{T_{c,n}(p) > k\} \geq \Pr\{T_{c,n}(p/(1-p_0)) > k\}.$$
Taking $p=v$ in this inequality gives
$$\Pr\{T_{c,n}(v) > k\} \geq \Pr\{T_{c,n}(u) > k\},$$
which completes the proof.
\end{proof}

In fact, we have shown in the proof of this theorem and more precisely in Relation (\ref{min0}) and using Theorem \ref{notetheo1} that for fixed $n,k$ and $k_0$, 
the function of $p$, $\Pr\{T_{c,n}(p) \leq k \mid N_0^{(k)}=k_0\}$ is a Schur-convex function, that is, a function that preserves the order of majorization.
In particular, from (\ref{min}), $\Pr\{T_{c,n}(p) \leq k\}$ is also a Schur-convex function,
even when $p_0 > 0$.
See \cite{Marshall81} for more details on this subject.

\section{Distribution maximizing the distribution of $T_{c,n}(p)$}

We consider in this section the problem of stochastically maximizing the time $T_{c,n}(p)$ when $p$ varies in a closed subset of dimension $n$. In the previous section the
minimization was made on the set ${\cal A}$ defined,
for every $n$ and for all $p_0 \in [0,1)$, by
$${\cal A} = \{p=(p_1,\ldots,p_n) \in (0,1)^n \mid 
p_1 + \cdots + p_n = 1 -p_0\}.$$
According to the application described in the Section 4,
we fix a parameter $\theta \in (0, (1-p_0)/n]$ and we are looking for distributions
$p$ which stochastically maximizes the time $T_{c,n}(p)$ on the subset ${\cal A}_\theta$ of ${\cal A}$
defined by
$${\cal A}_\theta =
\{p=(p_1,\ldots,p_n) \in  {\cal A} \mid 
p_j \geq \theta, \mbox{ for every } j = 1,\ldots,n\}.$$
The solution to this problem is given by the following theorem.
We first introduce the set ${\cal B}_\theta$ defined by the distributions of 
${\cal A}_\theta$ with all their entries, except one, are equal to $\theta$.
The set ${\cal B}_\theta$ has $n$ elements given by
$${\cal B}_\theta = \{(\gamma,\theta,\ldots,\theta),
(\theta,\gamma,\theta,\ldots,\theta),\ldots,
(\theta,\ldots,\theta,\gamma)\},$$
where $\gamma = 1 -p_0- (n-1)\theta$.
Note that since $\theta \in (0,(1-p_0)/n]$, we have 
$1-p_0-(n-1)\theta \geq \theta$ which means that 
${\cal B}_\theta \subseteq {\cal A}_\theta$.

\begin{theorem} \label{notetheo3}
For every $n \geq 1$ and $p=(p_1,\ldots,p_n) \in {\cal A}_\theta$ 
and for all $c=1,\ldots,n$,
we have $T_{c,n}(p) \sleq T_{c,n}(q)$,
for every $q \in {\cal B}_\theta$.
\end{theorem}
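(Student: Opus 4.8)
The plan is to mirror the proof of Theorem~\ref{notetheo1}, but to travel in the opposite direction along the majorization order: instead of repeatedly \emph{averaging} two coordinates to reach the uniform vector, I will repeatedly \emph{push one coordinate down to the floor} $\theta$ until I land in $\mathcal{B}_\theta$. The engine is again Theorem~\ref{autreloi}, now read in reverse, so the first thing I would record is that Theorem~\ref{autreloi} is valid for every $p_0\in[0,1)$. Indeed, the averaging transform alters only two of the entries $p_1,\ldots,p_n$ while preserving their sum, hence it leaves $p_0$ fixed and amounts to applying the very same transform to the genuine probability distribution $p/(1-p_0)$. Substituting into Relation~(\ref{min}), whose weights ${k \choose \ell}p_0^\ell(1-p_0)^{k-\ell}$ are nonnegative, gives $\Pr\{T_{c,n}(p')>k\}\le\Pr\{T_{c,n}(p)>k\}$ for all $k$; this is the Schur-convexity property extended to $p_0>0$ in the remark following Theorem~\ref{notetheo2}.

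Next I would set up the spreading step. Take any $p\in\mathcal{A}_\theta\setminus\mathcal{B}_\theta$; then at least two of its coordinates strictly exceed $\theta$, for if at most one did, the other $n-1$ coordinates would equal $\theta$ and the last one would be forced to equal $\gamma$, putting $p$ in $\mathcal{B}_\theta$. Choose two such indices $i,j$ with $\theta<p_i\le p_j$ and replace them by $p'_i=\theta$ and $p'_j=p_i+p_j-\theta$, keeping all other entries fixed. This preserves the total sum and keeps every entry $\ge\theta$, so the new vector stays in $\mathcal{A}_\theta$; moreover $p$ is recovered from it by the averaging transform of Theorem~\ref{autreloi} with $\lambda=(p_j-\theta)/(p_i+p_j-2\theta)\in[0,1]$. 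Thus Theorem~\ref{autreloi}, in its permutation-invariant form and now valid for $p_0\ge0$, yields $T_{c,n}(p)\sleq T_{c,n}(p')$.

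Each such step sends one more coordinate to the value $\theta$ and never lifts a coordinate off the floor, so the number of entries exceeding $\theta$ drops by exactly one; after at most $n-1$ steps the vector has all entries but one equal to $\theta$, i.e. it belongs to $\mathcal{B}_\theta$. Transitivity of $\sleq$ then gives $T_{c,n}(p)\sleq T_{c,n}(q)$ for the terminal $q$, and since all elements of $\mathcal{B}_\theta$ are permutations of one another while $\Pr\{T_{c,n}(\cdot)>k\}$ is symmetric, the inequality holds for every $q\in\mathcal{B}_\theta$. Conceptually, the entire argument is the assertion that $\mathcal{B}_\theta$ consists of the majorization-maximal points of $\mathcal{A}_\theta$: ordering the entries of $p$ decreasingly, $\sum_{i=1}^{j}p_{[i]}=(1-p_0)-\sum_{i>j}p_{[i]}\le(1-p_0)-(n-j)\theta=\gamma+(j-1)\theta=\sum_{i=1}^{j}q_{[i]}$, so $p\prec q$, and combining $p\prec q$ with Schur-convexity of $p\mapsto\Pr\{T_{c,n}(p)>k\}$ gives the claim in one stroke.

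I do not expect a genuine obstacle here, since the result is essentially a direct consequence of the already-established Theorem~\ref{autreloi}. The only points requiring care are getting the direction of the transform right (we need $p\prec q$, not the reverse, so the coordinate must be pushed \emph{down} to $\theta$ while another absorbs the excess), and verifying that the floor constraint $p_i\ge\theta$ is precisely what makes each spreading step legal, keeps the iterates inside $\mathcal{A}_\theta$, and forces termination at an element of $\mathcal{B}_\theta$ within $n-1$ steps.
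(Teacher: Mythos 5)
Your proposal is correct and follows essentially the same route as the paper: repeatedly apply Theorem~\ref{autreloi} in reverse (pushing one coordinate down to the floor $\theta$ while another coordinate absorbs the excess), so that after at most $n-1$ steps the vector lands in ${\cal B}_\theta$, and then invoke symmetry of $\Pr\{T_{c,n}(\cdot)>k\}$. Your explicit justification that Theorem~\ref{autreloi} remains valid when $p_0>0$, via the mixture formula (\ref{min}), is a step the paper only covers in the remark following Theorem~\ref{notetheo2}, so spelling it out is a slight gain in rigor rather than a deviation.
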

 
\begin{proof}
Since $\Pr\{T_{c,n}(p) > k\}$ is a symmetric function of $p$, 
$\Pr\{T_{c,n}(q) > k\}$ has the same value for every
$q \in {\cal B}_\theta$, 
so we suppose that $q_\ell=\theta$ for every $\ell \neq j$ and $q_j=\gamma$.

Let $p \in {\cal A}_\theta \setminus {\cal B}_\theta$ and let $i$ be the first entry of $p$ such that $i \neq j$ and $p_i > \theta$.
We then define the distribution $p^{(1)}$ as
$p_i^{(1)}=\theta$, $p_j^{(1)}=p_i+p_j-\theta > p_j$ and
$p_\ell^{(1)} = p_\ell$, for $\ell \neq i,j$.
This leads us to write 
$p_i = \lambda p_i^{(1)} + (1- \lambda) p_j^{(1)}$
and
$p_j = (1-\lambda) p_i^{(1)} + \lambda p_j^{(1)}$,
with
$$\lambda = \frac{p_j^{(1)} - p_i}{p_j^{(1)} - p_i^{(1)}}
= \frac{p_j - \theta}{p_j - \theta + p_i - \theta} \in [0,1).$$
From Theorem \ref{autreloi}, we get
$\Pr\{T_{c,n}(p) > k\} \leq \Pr\{T_{c,n}(p^{(1)}) > k\}$.
Repeating the same procedure from distribution $p^{(1)}$ and so on, we get, after at most $n-1$ steps, distribution $q$, that is
$\Pr\{T_{c,n}(p) > k\} \leq \Pr\{T_{c,n}(q) > k\}$, 
which completes the proof.
\end{proof}

To illustrate the steps used in the proof of this theorem, 
we take the following example. Suppose that $n=5$,
$\theta = 1/20$, $p_0=1/10$ and $q =(1/20,1/20,1/20,7/10,1/20)$, which means that $j=4$.
Suppose moreover that $p=(1/16,1/6,1/4,1/8,71/240)$.
We have $p \in {\cal A}_\theta$ and $q \in {\cal B}_\theta$.
In a first step, taking $i=1$ and since $j=4$, we get
$$p^{(1)}=(1/20,1/6,1/4,11/80,71/240).$$
In a second step, taking $i=2$ and since $j=4$, we get
$$p^{(2)}=(1/20,1/20,1/4,61/240,71/240).$$
In a third step, taking $i=3$ and since $j=4$, we get
$$p^{(3)}=(1/20,1/20,1/20,109/240,71/240).$$
For the fourth and last step, taking $i=5$ and since $j=4$, we get
$$p^{(4)}=(1/20,1/20,1/20,7/10,1/20)=q.$$

\section{Application to the detection of distributed denial of service attacks}
\label{sec:application}

A Denial of Service (DoS) attack tries to  progressively take  down an Internet  
resource by flooding this resource with more requests than it is capable to 
handle. A Distributed Denial of Service (DDoS) attack is a DoS attack triggered 
by thousands of machines that have been infected by a malicious software, 
with as immediate consequence the total shut down of targeted web resources 
(\emph{e.g.}, e-commerce websites).
A solution to detect and  to mitigate DDoS attacks is to monitor network 
traffic at routers and to look for highly frequent signatures that might 
suggest  ongoing attacks. 
A recent strategy followed by the attackers is to hide their massive flow of 
requests over a multitude of routes, so that locally, these requests do not 
appear as frequent, while globally each of these requests represent a portion $\theta$ of 
the network traffic. The term ``iceberg'' has been recently introduced~\cite{ZLOX10} to describe such an attack as only a very small part of the iceberg can be observed at each single router. The approach adopted to defend against such new attacks is to rely on multiple routers that locally  monitor their network traffic, and  upon detection of potential icebergs,  inform a monitoring server that aggregates all the monitored information  to accurately detect icebergs.  Now to prevent the server from being overloaded by all the monitored information, routers continuously keep track of the $c$ (among $n$) most recent distinct requests. These requests  locally represent at least a fraction $\theta$ of the local stream. Once collected, each  router  sends them to the server, and throw away all the requests $i$ that appear with a  probability $p_i$ smaller than  $\theta$. The sum of these small probabilities is represented by probability $p_0$. Parameter $c$ is dimensioned so that the frequency at which all the routers send their $c$ last requests  is low enough to enable the server  to aggregate  all of them and to trigger a DDoS alarm when needed.  This  amounts to compute the distribution of the time $T_{c,n}(p)$ needed to collect $c$ distinct requests among  the $n$ ones. Theorem~\ref{notetheo2} shows that the distribution $p$ that stochastically minimizes the time $T_{c,n}(p)$ is the almost uniform distribution $v$. This means that if locally each router receives a stream in which all the frequent requests  (that is, those whose probability of occurrence is greater than or equal to $\theta$) occur with same probability, then the complementary distribution of  the time needed to locally complete a collection of $c$ distinct requests is minimized. As a consequence the delay between any two interactions between a router and the server is minimized.

Another important aspect of DDoS detection applications is their ability to bound the detection latency of global icebergs, that is the maximal time that elapses between the presence of a global iceberg at some of the routers and its detection at the server. This can be implemented through a timer that will fire if no communication has been triggered between a router and the server, which happens  if locally a router has received less than $c$ requests. Dimensioning such a timer amounts to determine  the distribution of the maximal time it takes for a router to collect $c$ distinct requests. Theorem~\ref{notetheo3} shows that the distributions $p$ that stochastically maximizes this time $T_{c,n}(p)$ are all  the distributions $q \in {\cal B}_\theta$.

\end{document}